\newtheorem{theorem}{Theorem}[section]
\newtheorem{lemma}[theorem]{Lemma}
\newtheorem{remark}[theorem]{Remark}
\newtheorem{definition}[theorem]{Definition}
\newtheorem{example}[theorem]{Example}
\newtheorem{assumption}[theorem]{Assumption}
\begin{document}

\title{Robust utility maximization in markets with transaction 
costs\thanks{Both authors were supported by %
the ``Lend\"{u}let'' grant LP2015-6 of the Hungarian Academy of Sciences and by the NKFIH 
(National Research, Development and Innovation Office, Hungary) 
grant KH 126505. The authors thank Walter Schachermayer and two anonymous referees for helpful
comments.}}
\author{Huy N. Chau \and Mikl\'os R\'asonyi}
\date{\today}
\maketitle

\begin{abstract} We consider a continuous-time market with proportional transaction costs. Under appropriate
assumptions we prove the existence of optimal strategies for investors who maximize their worst-case utility over
a class of possible models. We consider utility functions defined either on the positive axis or on the 
whole real line. 
\end{abstract}

\section{Introduction}

In this paper, the existence of solutions to the utility maximization problem 
from terminal utility is studied in the presence of model ambiguity. 
We assume that investors prepare for the worst-case scenario in the sense that they take the infimum of utility functionals 
over the class of possible models before maximizing over admissible investment strategies.

The literature on robust optimization typically assumes that uncertainty is modeled by a family of prior measures 
$\mathcal{P}$ on some canonical space in which trajectories of the processes lie. Starting with \cite{quenez}, \cite{schied06}, the case in which $\mathcal{P}$ is dominated by a reference measure $P_*$ 
has received ample treatment. In diffusion settings this corresponds to uncertainty in the drift. 
Such an approach is not completely convincing since market participants
may also be uncertain about the volatilities. 

More recently, the non-dominated problem has also been studied in various contexts. 
For instance, \cite{tevzadze} investigated a compact set of possible drift and volatility coefficients and tackled 
the robust problem by solving an associated Hamilton--Jacobi--Bellman equation. 
In \cite{mdc}, where volatility coefficients are uncertain over a compact set and the drift is known, 
the theory of BSDEs is applied.
Existence results in a fairly general class of models are available
only in discrete time: see \cite{nutz}, \cite{blanchard-carassus2017},
\cite{neufeld-sikic2017}, \cite{bartl2017}, \cite{kupper} and \cite{andrea}.
A minimax result was established for bounded utilities in frictionless
continous time markets in \cite{denis-kervarec}. 

As far as we know, our existence results below are the first to apply in a
broad class of continuous-time models. We now summarize the principal
ideas underlying our arguments. First, we work under
proportional transaction costs. In this setting strategies can be identified with
finite variation processes which we endow with a suitable topology. Second,
instead of a family of measures we consider a parametrized family of stochastic processes
on a fixed filtered probability space. Necessarily, instead of one portfolio value 
we need to consider a family of possible values corresponding to the respective parameters.
Third, the latter fact forces us to take the family of strategies as our
domain of optimization (unlike most of the optimal investment literature since \cite{ks99} which
prefer to optimize over a set of random variables: the terminal values of possible portfolios).
Fourth, we exploit that an appropriate boundedness of terminal portfolio values implies
appropriate boundedness of the strategies themselves: this is false in continuous-time frictionless markets
but true in our setting. Fifth, we profit from a method first developed in \cite{m} that verifies
the supermartingale property of a putative optimizer, based on a lemma of \cite{do}.
Because of the fourth point above our techniques do not
seem to be applicable in the continuous-time frictionless setting. Note however the companion paper \cite{andrea} 
which treats \emph{discrete-time} frictionless markets.

The robust model in this paper, similar to the ones introduced in \cite{bp}, \cite{nutz2}, \cite{lr}, 
assumes that there is a parametrization for the uncertain dynamics of risky assets. However, as we will see
below, no specific assumption is
made about the parametrization and an arbitrary index set is permitted.
From a practical point of view, this approach is particularly tractable and easily implemented when it comes to calibration. 
For example, estimating drift and volatility parameters for diffusion price processes, the results only give guesses (hopefully
with some confidence sets) about the true values. 
Thus it is reasonable to parametrize ambiguity by considering suitable ranges which contain possible values for 
the coefficients being estimated. 

From a mathematical point of view, the treatment of robust models in the present paper 
simplifies technical issues, as it will become apparent from the proofs. 
Working on the same (filtered) probability space, instead of considering a family of measures, 
gives us more flexibility by avoiding the canonical setting with problems concerning null events, filtration completion, etc. 
Measurable selection arguments, see \cite{bn15}, \cite{biagini2015robust} or \cite{nutz}, are not needed anymore. 
Our approach can still
incorporate most of the relevant models classes and their laws do not need to be equivalent, see Section
\ref{sec:model}.


Compactness plays an important role in proving the existence of optimizers. Usually, the utility maximization problem is 
transformed into an ``abstract'' version with random variables (the terminal wealth of admissible portfolios), and then convex 
compactness results in $L^0$, in particular, 
Koml\'os-type arguments, are applied successfully, see \cite{ks99}. Unfortunately, the robust setting 
is unlikely to be lifted 
to ``abstract'' versions, since the uncertainty produces a whole collection of wealth processes. 
As a result, Koml\'os-type arguments on the space $L^0$ cannot be employed. Furthermore, the candidate 
dual problem in this setting does not, in general, admit a solution 
(see Remark 2.3 of \cite{bartl2017}) so the usual approach of getting optimizers 
from solutions of dual problems seems inapplicable. Therefore, we are forced to work on the primal problem directly. 

We are using two Koml\'os-type arguments: the first one is performed on the space of finite variation processes (strategies), 
which gives a candidate for the optimizer and the second is used in an Orlicz space context, to handle possible losses of trading when
establishing the supermartingale property of the optimal wealth process, relying on \cite{do}. 
A crucial observation is that the utility of a portfolio is a sequentially upper
semicontinuous function of the strategies (when the latter are equipped
with a convenient convergence structure), see \cite{paolo} where the optimization
problem was viewed in a similar manner.

The paper is organized as follows. Section \ref{sec:model} introduces the robust market model and technical assumptions. 
Sections \ref{sec:U_pos} and \ref{sec:U_R} study the existence of solutions of the robust utility maximization problems when 
the utility functions are defined on $\mathbb{R}_+$ and $\mathbb{R}$, respectively. Ramifications are discussed in Section
\ref{qclu}. Some preliminaries on  
finite variation processes and on Orlicz space theory are presented in Section \ref{appendix}. 

\section{The market model}\label{sec:model}

Let $(\Omega, \mathcal{F}, (\mathcal{F}_t)_{t\in [0,T]}, P)$ be a filtered probability space, 
where the filtration is assumed to be right-continuous and $\mathcal{F}_0$ coincides with the $P$-completion of the trivial sigma-algebra. We denote the class of real-valued random variables by $L^0$ and its
positive cone by $L^0_+$.

Let $\Theta$ be a (non-empty) set, which is interpreted as the parametrization of uncertainty. We consider a financial market consisting of a riskless asset $S^0_t = 1$ for all $t \in [0,T]$ and a risky asset, whose dynamics is unknown.
To describe the latter, we consider a family $(S^{\theta}_t)_{t\in [0,T]}$, 
$\theta \in \Theta$ of adapted, positive processes with continuous trajectories
which represent the possible price evolutions. 
No condition is imposed on $\Theta$ and, for the moment, on the dynamics of the risky asset either. 

\begin{remark} {\rm We now comment on the difference between our concept of model ambiguity and
that of most previous papers, where a family of priors is considered on a canonical space.

Working on a given probability space and filtration amounts to fixing the information structure
of the problem: the information flow is normally generated by a particular diriving process
(such as a multidimensional Brownian motion). Possible prices are then functionals of a parameter (finite or
infinite dimensional, see Examples \ref{gesa} and \ref{upo} below) and the driving noise.
Strategies are functionals adapted to the given information flow.

Considering a family of probabilities, one has greater liberty in the sense that no common driving 
noise is required, but
the choice of strategies is limited: they must be adapted functionals on
the canonical space.  In a sense, they must be ``closed loop'' controls depending on the price process.
In our modelling the controls are ``open loop'', they are adapted to an information flow that may be strictly
bigger than the natural filtration of any possible price process.

In a strictly formal sense none of two the approaches is more general than the other, 
see also examples in \cite{andrea}. Intuitively, the standard setting is the more general one,
while ours seems more easily tractable and it fits better a practical calibration and/or statistical 
inference framework.}
\end{remark}

We illustrate, by the following examples, that the present setting is useful and contains interesting models from 
previous studies.

\begin{example}\label{gesa} {\rm (The robust Black-Scholes market model.) The risky asset satisfies
the SDE
$$dS^{(\mu, \sigma)}_t = S^{(\mu, \sigma)}_t(\mu dt + \sigma dW_t),\ S_0^{(\mu,\sigma)}=s_0>0$$
where $\mu, \sigma$ are constants and $W$ is a standard Brownian motion. The uncertainty is modeled by
$$\Theta = \{ \theta=(\mu, \sigma) \in \mathbb{R}^2: \underline{\mu} \le \mu \le \overline{\mu},\ \underline{\sigma} \le \sigma \le \overline{\sigma} \},$$
where $\underline{\mu}\leq\overline{\mu}$, $0<\underline{\sigma}\leq\overline{\sigma}$ are given constants.
The classical Black-Scholes model corresponds to the case $\underline{\mu} = \overline{\mu}$ and $\underline{\sigma} = \overline{\sigma}$. It is observed that the laws of $S^{\mu_1,\sigma_1}, S^{\mu_2,\sigma_2}$ are singular when $\sigma_1 \ne \sigma_2$. If only volatility uncertainty is considered, then the family of laws is mutually singular.  See \cite{lr} and \cite{bp} about treatments for similar models. }
\end{example}

\begin{remark} {\rm In the domain of robust finance, measurable selection techniques
are often used, see e.g. \cite{nutz}. This requires certain measurability
of the family of laws corresponding to various models. In our present approach, however,
this is not a necessity. Let e.g. $\Theta'$ be a non-Borelian (or even
non-analytic) subset of $\Theta$ in Example \ref{gesa} above. Theorems \ref{thm_1} and \ref{thm9}
apply to the family of models $S^{\theta}$, $\theta\in \Theta'$, too.}
\end{remark}

\begin{example}\label{upo}{\rm In the above example, $\Theta$ was a subset of a finite-dimensional Euclidean space.
One may easily fabricate similar examples where $\Theta$ is infinite-dimensional. For instance,
let $\Theta$ consist of all pairs of predictable processes $(\mu_t,\sigma_t)$ such that, for all $t\in [0,T]$,
$\mu_t\in [\underline{\mu},\overline{\mu}]$ a.s. and $\sigma_t\in [\underline{\sigma},\overline{\sigma}]$
a.s. and consider the SDEs
$$dS^{(\mu, \sigma)}_t = S^{(\mu, \sigma)}_t(\mu_t dt + \sigma_t dW_t),\ S_0^{(\mu,\sigma)}=s_0>0,$$
for each $(\mu,\sigma)\in\Theta$.}
\end{example}

The following example extends the robust Black-Scholes model and allows an external economic factor.

\begin{example} {\rm (A factor model inspired by \cite{t12}, but much
simplified.) 
Let $\Theta \subset \mathbb{R}^{2\times 2}$ be a set. 

The risky asset is governed by the SDE
$$
dS^{\theta}_t = S^{\theta}_t(m(Y^{\theta}_t) + \sigma(\theta^{11}Y^{\theta}_t + \theta^{21}) )dt + \sigma dW^{1}_t),\
S^{\theta}_0=s_0>0, $$
and the factor process evolves according to
$$dY^{\theta}_t = \left[g(Y^{\theta}_t) + \left\langle \rho, \theta^{1\cdot}Y^{\theta}_t + \theta^{2\cdot} \right\rangle \right]  dt + \rho_1 dW^{1}_t + \rho_2dW^{2}_t,
\ Y_0^{\theta}=y_0,$$
where $m,g$ are suitable functions, $W=(W^1,W^2)$ is a two dimensional  Brownian motion and $\rho=(\rho_1,\rho_2)\in\mathbb{R}^2$ are correlation parameters. The bracket $\langle\cdot,\cdot\rangle$ denotes
scalar product in $\mathbb{R}^2$. Note that the original
setting of \cite{t12} cannot be directly transferred to the
present one as it involves a family of weak solutions of SDEs which are
not necessarily definable on our given stochastic basis.} 
\end{example}

The risky asset is traded under proportional transaction costs $\lambda \in (0,1)$. More precisely, investors have to pay a higher (ask) price $S^{\theta}$ when buying the risky asset but receive a lower (bid) price $(1-\lambda)S^{\theta}$ when selling it.

Let $\mathcal{V}$ denote the family of non-decreasing, right-continuous functions on $[0,T]$ which are $0$ at time $0$.
Let $\mathbf{V}$ denote the set of triplets 
$H=(H^{\uparrow},H^{\downarrow},H_0)$ where 
$H^{\uparrow}_t,H^{\downarrow}_t$, $t\in [0,T]$ are optional processes 
such that $H^{\uparrow}(\omega),H^{\downarrow}(\omega)\in\mathcal{V}$ for each 
$\omega\in\Omega$ and $H_0\in\mathbb{R}$ (deterministic). The space $\mathbf{V}$ can be equipped with a convergence structure, see Subsection \ref{section:v_space} below for details.

Each trading strategy corresponds to an element $H\in\mathbf{V}$. 
In this formulation, $H^{\uparrow}$ denotes the cumulative amount of transfers from the riskless asset to the risky one and 
$H^{\downarrow}$ represents the transfers in the opposite direction,
$H_0$ encodes the amount of initial transfer from the riskless
asset to the risky one. 
Therefore the portfolio position in
the risky asset at time $t$ equals $\phi_t:=H_0+H^{\uparrow}_t - H^{\downarrow}_t$, $t\in [0,T]$,
$\phi_{0-}:=0$.

For any real number
$x\in\mathbb{R}$, we denote $x^+:=\max\{0,x\}$, $x^-:=\max\{0,-x\}$.
For an initial capital $x\in\mathbb{R}$, the dynamics of cash account of an investor following strategy $H$ evolves according to
$$W^x_t(\theta,H) := x - H_0^+S^{\theta}_0+ H_0^-S^{\theta}_0(1-\lambda)- \int_0^t{S^{\theta}_u dH^{\uparrow}_u} + 
\int_0^t{(1-\lambda)S^{\theta}_udH^{\downarrow}_u},$$
for $t\in [0,T]$.

The liquidation value is defined by 
\begin{eqnarray}\label{wl}
W^{x,liq}_t(\theta, H) &:=& W^x_t(\theta,H) + \phi^{+}_t(1-\lambda)S^{\theta}_t - \phi^-_tS^{\theta}_t. 
\end{eqnarray}

We introduce the definition of consistent price systems, which play a similar role to martingale measures in frictionless markets, see \cite{ks}, \cite{grw}, and \cite{grw2}.

\begin{definition}
For each $\theta \in \Theta$, a $\lambda$-\emph{consistent price system} ($\lambda$-CPS) for the
model ${\theta}$ is a pair $(\tilde{S}^{\theta}, Q^{\theta})$ of a probability measure $Q^{\theta}\sim P$ and a $Q^{\theta}$ local martingale $\tilde{S}^{\theta}$ such that 
\begin{equation}\label{matroz}
(1- \lambda)S^{\theta}_t \le \tilde{S}^{\theta}_t \le S^{\theta}_t, \qquad a.s. \qquad \forall t \in [0,T].
\end{equation}
A $\lambda$-\emph{strictly consistent price system} ($\lambda$-SCPS) is a CPS such that
the inequalities are strict in \eqref{matroz}.
\end{definition}

We will impose the existence of consistent price systems for every model
$S^{\theta}$. In Section \ref{sec:U_pos}, we will need the following assumption,
in order to be able to use the results of \cite{csy}.

\begin{assumption}\label{strong}
	For each $\theta \in \Theta$ and for all $0<\mu<\lambda$, the price process $S^{\theta}$ admits 
	a $\mu$-CPS.
\end{assumption}
{This assumption is fulfilled iff, for every $\theta \in \Theta$, the process $S^{\theta}$ satisfies the no arbitrage condition for $\mu$-transaction cost for all $\mu>0$, see \cite{grw}. 
See Example \ref{example_cost} for a risky asset violating Assumption \ref{strong}.}

Clearly, a $\mu$-CPS is also a $\lambda$-SCPS. 

\begin{lemma}
If $(\tilde{S}^{\theta}, Q^{\theta})$ is a $\lambda$-strictly consistent price system, then the random variable 
\begin{equation}\label{ep_strict}
\delta(\theta):=\inf_{t \in [0,T]} \min\{\tilde{S}^{\theta}_t - (1-\lambda)S^{\theta}_t, S^{\theta}_t - \tilde{S}^{\theta}_t\}
\end{equation}
is almost surely strictly positive.
\end{lemma}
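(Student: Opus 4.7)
The plan is a pathwise compactness argument carried out on a set of full probability. First I would isolate a measurable event $\Omega_0$ with $P(\Omega_0)=1$ on which the strict sandwich $(1-\lambda)S^{\theta}_t(\omega) < \tilde{S}^{\theta}_t(\omega) < S^{\theta}_t(\omega)$ holds simultaneously for every $t \in [0,T]$. Since $Q^{\theta} \sim P$, the local martingale $\tilde{S}^{\theta}$ admits a c\`adl\`ag $P$-version; combined with the continuity of $S^{\theta}$, this allows one to propagate the pointwise a.s.\ strict sandwich (valid for each fixed $t$ by definition of a $\lambda$-SCPS) from a countable dense subset of $[0,T]$ to the whole interval, after intersecting the corresponding countable family of null sets.

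Once $\omega \in \Omega_0$ is fixed, the trajectory $f(t) := \min\{\tilde{S}^{\theta}_t(\omega) - (1-\lambda)S^{\theta}_t(\omega),\ S^{\theta}_t(\omega) - \tilde{S}^{\theta}_t(\omega)\}$ is a c\`adl\`ag function on the compact interval $[0,T]$ that is strictly positive at every point, and I want to conclude $\inf_t f(t) > 0$. Suppose not; extract $t_n \in [0,T]$ with $f(t_n) \to 0$ and, by compactness, a subsequence $t_n \to t^{\ast}$. The cases $t_n \equiv t^{\ast}$ and $t_n \downarrow t^{\ast}$ are ruled out by right-continuity, as either would imply the contradiction $f(t^{\ast})=0$. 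In the remaining case $t_n \uparrow t^{\ast}$ strictly, the continuity of $S^{\theta}$ together with the existence of left limits for $\tilde{S}^{\theta}$ forces $\tilde{S}^{\theta}_{t^{\ast}-}(\omega) \in \{(1-\lambda)S^{\theta}_{t^{\ast}}(\omega),\ S^{\theta}_{t^{\ast}}(\omega)\}$, contradicting the natural extension of the strict sandwich to left limits.

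The delicate point, and the place where I expect to have to argue most carefully, is precisely this last exclusion: an abstract c\`adl\`ag local martingale could \emph{a priori} jump from a left limit sitting on one of the two boundaries into the open interval $((1-\lambda)S^{\theta}_{t^{\ast}},\ S^{\theta}_{t^{\ast}})$, which would leave $\delta(\theta)=0$ possible. I would dispose of this in the standard way (as in the Campi--Schachermayer line of work on transaction costs): read $\lambda$-SCPS to mean that $\tilde{S}^{\theta}_t/S^{\theta}_t$ and $\tilde{S}^{\theta}_{t-}/S^{\theta}_t$ both lie in the open interval $(1-\lambda,1)$ for every $t$. Enlarging $\Omega_0$ by the corresponding null set, the compactness argument of the previous paragraph then goes through and yields $\delta(\theta)(\omega) > 0$ for every $\omega \in \Omega_0$.
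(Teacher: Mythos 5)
The paper itself gives no argument here; it defers to Lemma 3.6.4 of Kabanov--Safarian, and your pathwise compactness argument (the closure of the graph of a c\`adl\`ag path on the compact interval $[0,T]$ must keep a positive distance from the boundary of the open band) is essentially the argument behind that reference. Your second and third paragraphs are sound: you correctly isolate the only delicate point, namely the left limits $\tilde S^{\theta}_{t-}$, and resolving it by reading ``$\lambda$-SCPS'' as requiring both $\tilde S^{\theta}_t$ and $\tilde S^{\theta}_{t-}$ to lie in the open band for all $t$ simultaneously is exactly the convention of the cited literature (Campi--Schachermayer, Kabanov--Safarian), so that move is legitimate even though it is a strengthening by fiat rather than a proof.

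The genuine gap is in your first paragraph. You cannot ``propagate the pointwise a.s.\ strict sandwich from a countable dense subset of $[0,T]$ to the whole interval'': strict inequalities do not survive the limits you invoke. If $(1-\lambda)S^{\theta}_q<\tilde S^{\theta}_q<S^{\theta}_q$ holds for all rational $q$ on a full-measure set, right-continuity only yields $(1-\lambda)S^{\theta}_t\le\tilde S^{\theta}_t\le S^{\theta}_t$ at an arbitrary $t$. More fundamentally, ``for each fixed $t$, $X_t>0$ a.s.'' does not imply ``a.s., $X_t>0$ for all $t$'' even for continuous processes (take $X=|B|$ with $B$ a Brownian motion started at $1$), so no intersection over a countable family of null sets can deliver the simultaneous strict sandwich. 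Either you read the quantifiers in \eqref{matroz} as ``a.s., for all $t$'' --- in which case the simultaneous strict sandwich (including, per your own convention, the left limits) is part of the hypothesis and your first paragraph is unnecessary --- or you must supply a genuine argument, e.g.\ via the (predictable) section theorem, exploiting that $\tilde S^{\theta}$ is a $Q^{\theta}$-local martingale squeezed between multiples of the continuous process $S^{\theta}$, to exclude that $\tilde S^{\theta}_t$ or $\tilde S^{\theta}_{t-}$ touches the boundary at some random time with positive probability. As written, that step fails; once it is repaired (most simply by the definitional reading), the rest of your proof is correct.
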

\begin{proof}
The argument follows that of Lemma 3.6.4 in \cite{ks}. 
\end{proof}

Let $$
\mathcal{M}^{\theta}:=\{dQ^{\theta}/dP:\ (\tilde{S}^{\theta},Q^{\theta})\mbox{ is a }\lambda\mbox{-CPS}\}.
$$ 

For a consistent price system $(\tilde{S}^{\theta}, Q^{\theta})$, we define the process
\begin{equation}\label{process_v}
V_t^x(\theta,H) := W^x_t(\theta,H) + \phi_t \tilde{S}^{\theta}_t, 
\end{equation}
without emphasizing the dependence of $V$ on the specific consistent price system. It is easy to check that $W^{x,liq}_t(H) \le V^x_t(\theta,H)$ a.s.,
for all $t \in [0,T]$.

\section{Utility function on $\mathbb{R}_+$}\label{sec:U_pos}

\begin{assumption}\label{U_positive}
The utility function $U :(0,\infty) \to \mathbb{R}$ is nondecreasing and concave. 
\end{assumption} 

Define the convex conjugate of $U$ by
$$V(y):=\sup_{x >0}(U(x)-xy), \qquad y>0.$$
Admissible strategies are defined in a natural way, thanks to the domain of the utility function.

\begin{definition}
A strategy $H=(H^{\uparrow},H^{\downarrow},H_0)\in\mathbf{V}$ is \emph{admissible} for initial capital $x>0$ 
and for the model $\theta\in\Theta$, if, for all $t \in [0,T]$, 
$$
W^{x,liq}_t(\theta, H) \ge 0\mbox{ a.s.}
$$
Denote by $\mathcal{A}^{\theta}(x)$ the set of all admissible strategies for $\theta$.
Set 
$$
\mathcal{A}^{\theta}_0(x):=\{H\in\mathcal{A}^{\theta}(x):\ \phi_T=H_0+H^{\uparrow}_T-H^{\downarrow}_T=0\},
$$
and
$\mathcal{A}(x) = \bigcap_{\theta \in \Theta} \mathcal{A}_0^{\theta}(x)$. 
\end{definition}

\begin{remark} {\rm For each 
$H\in\mathcal{A}(x)$, 
$W^{x,liq}_T(\theta,H) = W^x_T(\theta,H)=V^x_T(\theta,H)$
by 
$\phi_T=0$. 
We also see from (\ref{wl}) that at time $0 < t < T$, the liquidation value is neither 
concave nor convex in $H$. However, the condition $\phi_T = 0$ recovers concavity of 
the liquidation value with respect to $H$ at time $T$. This is crucial for finding 
maximizers in the subsequent analysis.}
\end{remark}

Let $x>0$. Note that $\mathcal{A}(x)\neq\emptyset$ since the identically zero strategy is therein.
Investors want to find the optimizer for
\begin{equation}\label{prob_con}
u(x): = \sup_{H\in\mathcal{A}(x)} \inf_{\theta \in \Theta} EU(W^{x,liq}_T(\theta,H)).
\end{equation}
It is worth noting that maximizing in $H$ is a concave problem, however, 
minimizing over $\Theta$ is \emph{not} a convex problem. 

For each $\theta \in \Theta$ and $x>0$, we denote 
$$\mathcal{C}^{\theta}(x) := 
\left\lbrace  X\in L^0_+:\, X\leq W^{x,liq}_T(\theta,H)\mbox{ for some }H \in \mathcal{A}^{\theta}(x)\right\rbrace.$$
For each $y>0$, the set of \emph{supermartingale deflators} $\mathcal{B}^{\theta}(y)$ consists of the strictly positive processes $Y=(Y^0_t,Y^1_t)_{t\in [0,T]}, Y^0_0 =y$ such that $\frac{Y^1}{Y^0} \in [(1-\lambda)S^{\theta},S^{\theta}]$ and $W^{x}(\theta,H) Y^0 + \phi Y^1$ is a (c\`adl\`ag) 
supermartingale for all $H \in \mathcal{A}^{\theta}(x)$.
Also, we define 
$$\mathcal{D}^{\theta}(y):= \{ Y^0_T: (Y^0,Y^1) \in \mathcal{B}(y) \}.$$ 
The primal and dual value functions for the $\theta$-model are
$$u^{\theta}(x):= \sup_{f \in \mathcal{C}^{\theta}(x)} EU(f), \qquad v^{\theta}(y) := \inf_{h \in \mathcal{D}^{\theta}(y)} E V(h).$$

The next lemma states that the sets $\mathcal{C}^{\theta}(x)$ and $\mathcal{D}^{\theta}(y)$ are polar to each 
other. It follows directly from Proposition 2.9 of \cite{csy}. 

\begin{lemma}\label{q} Fix $x,y>0$. Let Assumption \ref{strong} be in force.
A random variable $X\in L^0_+$ satisfies $X\in\mathcal{C}^{\theta}(x)$
iff $EXY\leq xy$ for all $Y\in\mathcal{D}^{\theta}(y)$. A random variable $Y\in L_+^0$ satisfies
$Y\in\mathcal{D}(y)$ iff $EXY\leq xy$ for all $X\in\mathcal{C}^{\theta}(x)$.\hfill $\Box$
\end{lemma}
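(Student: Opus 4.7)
The plan is to reduce the statement to Proposition 2.9 of \cite{csy}, which is exactly the bipolar duality in a continuous-time market with proportional transaction costs $\lambda$ for a single price process. For each fixed $\theta \in \Theta$, the model $S^{\theta}$ is an adapted, continuous, strictly positive process, and Assumption \ref{strong} supplies $\mu$-CPS's for every $\mu \in (0,\lambda)$, which is precisely the standing hypothesis of \cite{csy}. The sets $\mathcal{C}^{\theta}(x)$ and $\mathcal{D}^{\theta}(y)$ defined above coincide (up to a trivial rescaling by $y$) with those appearing in \cite{csy}, so the cited proposition applies verbatim to the model indexed by $\theta$.

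Before simply citing, I would verify the two inclusions to be sure that the identifications match. The easy direction, that $X \in \mathcal{C}^{\theta}(x)$ implies $E[XY] \le xy$ for all $Y = Y^0_T \in \mathcal{D}^{\theta}(y)$, runs as follows: pick $H \in \mathcal{A}^{\theta}(x)$ with $X \le W^{x,liq}_T(\theta,H)$ and $(Y^0,Y^1) \in \mathcal{B}^{\theta}(y)$. Then $N_t := W^x_t(\theta,H)Y^0_t + \phi_t Y^1_t$ is a c\`adl\`ag supermartingale starting from $xy$, so $E[N_T] \le xy$. Since $Y^1_T/Y^0_T \in [(1-\lambda)S^{\theta}_T,S^{\theta}_T]$, a term-by-term estimate gives
$$
\phi_T Y^1_T = \phi_T^+ Y^1_T - \phi_T^- Y^1_T \ge \phi_T^+(1-\lambda)S^{\theta}_T Y^0_T - \phi_T^- S^{\theta}_T Y^0_T,
$$
hence $N_T \ge Y^0_T W^{x,liq}_T(\theta,H) \ge Y^0_T X$, yielding $E[XY^0_T] \le xy$.

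The non-trivial reverse inclusion is the content of \cite{csy}'s Proposition 2.9, and is where their machinery is indispensable. Schematically, one must show that $\mathcal{C}^{\theta}(x)$ is convex, solid, and, crucially, closed in $L^0_+$, and then apply the Brannath--Schachermayer bipolar theorem to identify its polar. Existence of $\mu$-CPS for every $\mu < \lambda$ guarantees that $\mathcal{C}^{\theta}(x)$ is bounded in probability and that $\mathcal{B}^{\theta}(y)$ is rich enough so that the polar is exactly $\{Y/y : Y \in \mathcal{D}^{\theta}(y)\}$. The symmetric statement about $\mathcal{D}^{\theta}(y)$ then follows by a second application of bipolarity. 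The main obstacle, handled once and for all in \cite{csy}, is precisely the $L^0$-closedness of $\mathcal{C}^{\theta}(x)$ in the continuous-time transaction-cost setting; since Assumption \ref{strong} imports their hypotheses wholesale, no further work is required here.
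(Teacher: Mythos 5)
Your proposal matches the paper's treatment: the lemma is proved there by direct citation of Proposition~2.9 of \cite{csy}, exactly as you do, and your additional verification of the easy inclusion via the supermartingale $W^x_t(\theta,H)Y^0_t+\phi_t Y^1_t$ is a correct (if unnecessary) sanity check of the identification. No discrepancy to report.
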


We impose a technical assumption.
\begin{assumption}\label{finite}
The dual value function $v^{\theta}(y),y>0$ is finite for \emph{all} $\theta \in \Theta$.
\end{assumption}

\begin{theorem}\label{thm_1}
Let $x>0$. Under Assumptions \ref{strong}, \ref{U_positive}, \ref{finite}, the robust utility maximization problem 
(\ref{prob_con}) admits a solution, i.e. there is $H^*\in\mathcal{A}(x)$ satisfying 
\begin{equation*}
u(x) = \inf_{\theta \in \Theta} EU(W^{x,liq}_T(\theta, H^*)).  
\end{equation*}
When $U$ is bounded from above, the same conclusion holds assuming only that there exists (at least)
one $\tilde{\theta}\in\Theta$ for which there exists a $\lambda$-SCPS.  
\end{theorem}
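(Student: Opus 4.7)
The plan is to attack the primal problem (\ref{prob_con}) directly by a Komlós-type compactness argument in the space $\mathbf{V}$ of finite-variation strategies. I start with a maximizing sequence $H^n \in \mathcal{A}(x)$ so that $\inf_\theta E U(W^{x,liq}_T(\theta, H^n)) \uparrow u(x)$. For each fixed $\theta$, the map $H \mapsto W^{x,liq}_T(\theta, H)$ is concave on $\{\phi_T = 0\}$: the liquidation correction vanishes there, the integral term $-\int S^\theta\, dH^\uparrow + \int (1-\lambda)S^\theta\, dH^\downarrow$ is linear, and the initial-transfer term $-H_0^+ S_0^\theta + H_0^- (1-\lambda) S_0^\theta$ is concave in $H_0$ because $\lambda > 0$. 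Since $U$ is concave and nondecreasing, so is $H \mapsto E U(W^{x,liq}_T(\theta, H))$, and taking infimum over $\theta$ preserves concavity. I may therefore replace $H^n$ by convex combinations $\tilde H^n = \sum_{k \geq n} \alpha_k^n H^k$ without losing the maximizing property.

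To extract a convergent subsequence I exploit the fact, specific to proportional transaction costs and flagged as the ``fourth point'' in the introduction, that admissibility together with a single consistent price system controls the total variation of the strategies. Fixing some $\tilde\theta$ admitting a $\lambda$-SCPS $(\tilde S^{\tilde\theta}, Q^{\tilde\theta})$, the process $V^x_t(\tilde\theta, H^n) = W^x_t(\tilde\theta, H^n) + \phi^n_t \tilde S^{\tilde\theta}_t$ is a nonnegative $Q^{\tilde\theta}$-supermartingale with initial value $x$, so $E_{Q^{\tilde\theta}}[V^x_T(\tilde\theta, H^n)] \leq x$. Combining this with the strict gap $\delta(\tilde\theta) > 0$ of (\ref{ep_strict}) yields an $L^0(P)$-bound on $H^{\uparrow,n}_T + H^{\downarrow,n}_T$, and the Komlós-type lemma for $\mathbf{V}$ (see Section \ref{appendix}) produces, after a further passage to convex combinations, a limit $H^* \in \mathbf{V}$ in the convergence structure of that space.

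I then verify that $H^* \in \mathcal{A}(x)$ and realizes the supremum. The constraint $\phi^*_T = 0$ is preserved in the Helly-type limit; nonnegativity of $W^{x,liq}_t(\theta, H^*)$ for every $\theta$ follows by passing to the limit in the integrals against the continuous processes $S^\theta$. For each fixed $\theta$, sequential upper semicontinuity of $H \mapsto E U(W^{x,liq}_T(\theta, H))$ along this convergence (in the spirit of \cite{paolo}) gives
\[E U\bigl(W^{x,liq}_T(\theta, H^*)\bigr) \;\geq\; \limsup_n E U\bigl(W^{x,liq}_T(\theta, \tilde H^n)\bigr) \;\geq\; \limsup_n \inf_{\theta' \in \Theta} E U\bigl(W^{x,liq}_T(\theta', \tilde H^n)\bigr) \;=\; u(x),\]
so $\inf_\theta E U(W^{x,liq}_T(\theta, H^*)) \geq u(x)$, and the reverse inequality is immediate.

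The crux is this upper semicontinuity step. A.s.\ convergence $W^{x,liq}_T(\theta, \tilde H^n) \to W^{x,liq}_T(\theta, H^*)$ does not on its own yield convergence of the expectations: a reverse-Fatou argument demands an integrable upper bound for the family $U(W^{x,liq}_T(\theta, \tilde H^n))$. This is precisely where Assumption \ref{finite} enters: the polarity Lemma \ref{q} gives $E U(W) \leq E V(Y) + E[WY] \leq E V(Y) + xy$ for $Y \in \mathcal{D}^\theta(y)$, and an Orlicz-space / de la Vall\'ee-Poussin refinement in the spirit of \cite{m}, \cite{do} upgrades this to a common Young-function dominator for the sequence, delivering uniform integrability. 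When $U$ is bounded above, the reverse-Fatou hypothesis is automatic, Assumption \ref{finite} and the full strength of Assumption \ref{strong} become unnecessary, and the compactness step requires only a single $\tilde\theta$ admitting a $\lambda$-SCPS.
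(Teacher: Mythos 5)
Your overall architecture coincides with the paper's: maximizing sequence, passage to convex combinations (justified by concavity of $H\mapsto W^{x,liq}_T(\theta,H)$ on $\{\phi_T=0\}$), a total-variation bound extracted from a single strictly consistent price system via the gap $\delta(\theta)>0$ and the supermartingale property of $V^x(\theta,H^n)$ (which, note, itself requires the Ansel--Stricker argument and a separate check that $\sup_n H_0^{n,-}<\infty$, neither of which you supply), then Lemma \ref{gom}, and finally upper semicontinuity of $H\mapsto EU(W^{x,liq}_T(\theta,H))$ along the convergence in $\mathbf{V}$. Up to these omissions the compactness half is sound.

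The genuine gap is in the step you yourself identify as the crux. You claim that the duality inequality $EU(W)\le EV(Y)+xy$ can be ``upgraded'' by an Orlicz / de la Vall\'ee-Poussin argument to a common Young-function dominator for $U^+(W^{x,liq}_T(\theta,\tilde H^n))$, yielding uniform integrability. This does not work: the duality inequality only delivers $\sup_n EU^+(W^{x,liq}_T(\theta,\tilde H^n))<\infty$, i.e.\ a uniform $L^1$ bound, and a uniform $L^1$ bound does not imply uniform integrability; Assumption \ref{finite} provides no superlinear moment with which to invoke de la Vall\'ee-Poussin. (The Orlicz machinery of \cite{do} is used in the paper only in Section \ref{sec:U_R}, and there to control the \emph{negative} parts of the wealth, where the $\Delta_2$ condition \eqref{v_moderate} is available --- nothing analogous is assumed for the positive parts here.) The paper's actual mechanism is a Kramkov--Schachermayer-type concatenation argument: from Assumption \ref{finite} and $u^{\theta}(x)\le v^{\theta}(y)+xy$ one first derives $u^{\theta}(x)/x\to 0$ as $x\to\infty$; then, if uniform integrability failed, one could find disjoint sets $A_n$ carrying utility mass $\ge\alpha$ each, sum the corresponding truncated payoffs into $w^n$, verify $w^n\in\mathcal{C}^{\theta}(nx)$ via the polarity Lemma \ref{q} and the supermartingale deflator bound, and conclude $u^{\theta}(nx)/(nx)\ge\alpha/x>0$, a contradiction. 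Without this (or an equivalent substitute) your reverse-Fatou step, and hence the optimality of $H^*$, is unproved. Your treatment of the bounded-$U$ case is correct as stated.
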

\begin{proof} If $U$ is constant then there is nothing to prove. Otherwise, by adding a constant
to $U$, we may assume that $U(\infty)>0>U(0)$.

Notice that $U(\infty) > 0$ and
\begin{equation}\label{bante}
u^{\theta}(x)\geq U(x)
\end{equation}
imply $\liminf_{x\to\infty}u^{\theta}(x)/x\geq 0$. From Lemma \ref{q}, trivially,
\begin{equation}\label{tante}
u^{\theta}(x) \le v^{\theta}(y) + xy,
\end{equation}
for all $y>0$. Fixing $y$, we obtain $\limsup_{x\to\infty}u^{\theta}(x)/x \le y$ and 
sending $y$ to zero gives 
\begin{equation}\label{inada}
\lim_{x\to \infty} \frac{u^{\theta}(x)}{x} = 0.
\end{equation}

After these preparations, we turn to the main arguments.
Assumption \ref{finite}, \eqref{tante} and \eqref{bante} imply that $u^{\theta}(x)$ is finite for each 
$\theta$ and so is $u(x)$. 
Let $H^n \in \mathcal{A}(x)$, $n\in\mathbb{N}$ be a maximizing sequence, i.e.
$$\inf_{\theta \in \Theta} EU(W^{x,liq}_T(\theta, H^n)) \uparrow u(x),\ n\to\infty.$$

Let us fix, for the moment, $\theta\in\Theta$ 
and a $\mu$-CPS $(\tilde{S}^{\theta}, Q^{\theta})$ with $0<\mu < \lambda$. First, we prove that the process 
\begin{equation}\label{s}
V^x_t(\theta,H^n) = W^x_t(\theta,H^n) + \phi^n_t \tilde{S}^{\theta}_t,
\end{equation}
is a $Q^{\theta}$-supermartingale for all $n$. Indeed, It\^o's formula gives
\begin{eqnarray*}
dV^x_t(\theta,H^n) &=& - S^{\theta}_tdH^{n,\uparrow}_t + (1-\lambda)S^{\theta}_tdH^{n,\downarrow}_t + \tilde{S}^{\theta}_td\phi^n_t + \phi^n_td\tilde{S}^{\theta}_t\\
&=& (\tilde{S}^{\theta}_t - S^{\theta}_t)dH^{n,\uparrow}_t + [(1-\lambda)S^{\theta}_t - \tilde{S}^{\theta}_t]dH^{n,\downarrow}_t + \phi_t^nd\tilde{S}^{\theta}_t.
\end{eqnarray*}
Admissibility of $H^n$ implies  
\begin{eqnarray}\label{bla}
& & H^{n,+}_0 (S^{\theta}_0-\tilde{S}^{\theta}_0) + \int_0^t (S^{\theta}_u - \tilde{S}^{\theta}_u)
dH^{n,\uparrow}_u \\
\nonumber &+& \int_0^t [\tilde{S}^{\theta}_u - (1-\lambda)S^{\theta}_u]dH^{n,\downarrow}_u  + \left( 
\int_{0}^t{\phi^n_u d\tilde{S}^{\theta}_u} \right)^- \\
\nonumber &\le& x + H^{n,-}_0 [S^{\theta}_0(1-\lambda)-\tilde{S}_0^{\theta} ]+  
\left( \int_{0}^t{\phi^n_u\, d\tilde{S}^{\theta}_u} \right)^+ .
\end{eqnarray}
In particular, we obtain $\left( \int_0^t{\phi^n_u d\tilde{S}^{\theta}_u} \right)^- \le x + H^{n,-}_0 S^{\theta}_0(1-\lambda) $ 
for every $t \in [0,T]$ and therefore $\int_0^t{\phi^n_u d\tilde{S}^{\theta}_u}$, $t\in [0,T]$ is a 
$Q^{\theta}$-supermatingale, see \cite{as}. It follows that $V_t^x(\theta,H^n)$, $t\in [0,T]$ is also
a $Q^{\theta}$-supermartingale.  

We claim that $\sup_n H_0^{n,-}$ is finite. If this were not the case then, along a subsequence $n_k$,
$k\in\mathbb{N}$ we would have $H_0^{n_k,-}\to\infty$, $k\to\infty$ and $H^{n_k,+}_0=0$, $k\in\mathbb{N}$. Taking $Q^{\theta}$-expectation
in \eqref{bla} we would get 
$$
0\leq x+\liminf_{n\to\infty}  H^{n,-}_0(S_0^{\theta}(1-\lambda)-\tilde{S}_0^{\theta})=-\infty,
$$ 
a contradiction. Hence the supremum is indeed finite.

Furthermore, from the supermartingale property of $\int_0^t{\phi^n_u d\tilde{S}^{\theta}_u}$,
$$\sup_n E^{Q^{\theta}}\left( \int_0^T{\phi^n_u d\tilde{S}^{\theta}_u} \right)^+ \le x + \sup_nH^{n,-}_0 S^{\theta}_0(1-\lambda) $$
follows.
Using (\ref{ep_strict}), we deduce that
\begin{eqnarray*}
\sup_n E^{Q^{\theta}}  \int_0^T{ \delta (\theta) \left( dH^{n,\uparrow}_u + dH^{n,\downarrow}_u\right)}  & \le&  \\
\sup_n E^{Q^{\theta}} \left( \int_0^T{ (S^{\theta}_u - \tilde{S}^{\theta}_u)dH^{n,\uparrow}_u + [\tilde{S}^{\theta}_u - (1-\lambda)S^{\theta}_u ]dH^{n,\downarrow}_u} \right) &<& \infty. 
\end{eqnarray*}
Lemma \ref{gom} implies that there exist convex weights $\alpha^n_j \ge 0$, $j=n,...,M(n)$, and 
$\sum_{j=n}^{M(n)} \alpha_j^n = 1$ such that 
$\tilde{H}^n:= \sum_{j=n}^{M(n)} H^n \to H^*$ in $\mathbf{V}$. Since convex combinations improve 
utility of concave functions, we obtain that $\tilde{H}^n, n \in \mathbb{N}$ is also a maximizing sequence, 
$$ \inf_{\theta \in \Theta} EU(W^{x,liq}_T(\theta, H^n)) \le \inf_{\theta \in \Theta} EU(W^{x,liq}_T(\theta, \tilde{H}^n)) \to 
u(x),\ n\to\infty.
$$

We now prove that the sequence  $U^+(W^{x,liq}_T(\theta,\tilde{H}^n)), n \in \mathbb{N}$ is uniformly integrable for 
each $\theta\in\Theta$. Suppose, by contradiction, that the sequence is not uniformly integrable
for some $\theta$. Then one could find disjoint sets $A_n\in\mathcal{F}$, $n\in\mathbb{N}$ and a 
constant $\alpha > 0$ such that
$$E\left(  U^+(W^{x,liq}_T(\theta,\tilde{H}^n))1_{A_n} \right) \ge \alpha, \qquad \text{ for } n \ge 1.$$
Set $w^n = \sum_{i=1}^n W^{x,liq}_T(\theta,\tilde{H}^i) 1_{W^{x,liq}_T(\theta,\tilde{H}^i) \ge u_0}1_{A_i}$, 
where $u_0$ is chosen such that it 
satisfies $U(u_0) = 0$. It is immediate that 
$$EU(w^n) = \sum_{i=1}^n E\left(  U^+(W^{x,liq}_T(\theta,\tilde{H}^i))1_{A_i} \right) \ge n\alpha.$$
In addition, for any $h \in \mathcal{D}^{\theta}(1)$, the supermartingale property shows that 
$E[hw^n] \le nx$. Consequently, we obtain $w^n \in C^{\theta}(nx)$, by Lemma \ref{q}. We compute
$$ \frac{u^{\theta}(nx)}{nx} \ge  \frac{EU(w^n)}{nx} \ge \frac{\alpha}{x} > 0$$
and passing to the limit when $n\to\infty$ contradicts (\ref{inada}). 
Thus, $U^+(W^{x,liq}_T(\theta,\tilde{H}^n))$ $n\in\mathbb{N}$ is indeed uniformly integrable. 

Since $\tilde{H}^n \to H^* \in \mathbf{V}$, $W^{x,liq}_T(\theta, \tilde{H}^n) \to W^{x,liq}_T(\theta, H^*)$ almost surely by Remark \ref{as}, so Fatou's lemma and uniform integrability imply 
\begin{align*}
\limsup_{n\to\infty}\left(\inf_{\theta \in \Theta} EU(W^{x,liq}_T(\theta, \tilde{H}^n)) \right) &\le \inf_{\theta \in \Theta} \limsup_{n\to\infty} EU(W^{x,liq}_T(\theta, \tilde{H}^n))\\
&\le  \inf_{\theta \in \Theta} E U(W^{x,liq}_T(\theta, H^*))  
\end{align*}
which proves that $H^*$ is the optimizer. It remains to check that $H^*\in\mathcal{A}(x)$. For each $\theta$, $W^{x,liq}_t(\theta,H^*)\geq 0$ a.s.
for Lebesgue-almost every $t$, by Remark \ref{as} so
we get admissibility of $H^*$ since $t\to W^{x,liq}_t$ is a.s. right-continuous.

In the case where $U$ is bounded from above, it is enough to
perform the first part of the above proof for $\tilde{\theta}$ 
and then, for \emph{each} $\theta$, one may simply invoke Fatou's lemma to complete the proof.
\end{proof}

\begin{remark} {\rm In the classical theory where there is no uncertainty, i.e. 
when $\Theta$ contains only one element, the existence result holds assuming the finiteness of $u(x)$ only. 
This condition, however, does not warrant
to find optimizers in the robust problem. Indeed, the finiteness of $u(x)$ makes the 
robust problem well-posed, compactness gives a candidate for the optimizer, 
but this is still not enough to prove that the candidate is indeed the optimizer. 
To complete the proof, it is necessary to have upper-semicontinuity of the expected 
utility when considered as a function of the strategy variable. In \cite{nutz}, a counterexample 
(in which $u(x)$ is finite but one could not find the optimizer) is given in the nondominated case. 
The author's argument exploits precisely the lack of upper-semicontinuity property in \emph{one} model. 
Furthermore, \cite{nutz} gives a sufficient condition to have upper-semicontinuity, 
namely the integrability of positive parts of the utility function under \emph{every} possible model, 
see Theorem 2.2 therein and also \cite{blanchard-carassus2017} for further developments.  In our approach, upper-semicontinuity follows from the finiteness of the dual value function for \emph{every} model.}
\end{remark}

\section{Utility functions on $\mathbb{R}$}\label{sec:U_R}
 
\begin{assumption}\label{U_R}
The utility function $U :\mathbb{R} \to \mathbb{R}$  is bounded
from above, nondecreasing, concave, $U(0) = 0$. Define the convex conjugate of $U$ by
$$V(y):=\sup_{x \in \mathbb{R}}(U(x)-xy), \qquad y>0.$$  
We also assume that 
\begin{eqnarray}
\lim_{x \to -\infty } \frac{U(x)}{x} = \infty, \label{u_left_tail}\\
\limsup_{y \to \infty } \frac{V(2y)}{V(y)} < \infty. \label{v_moderate}
\end{eqnarray}
\end{assumption}

\begin{remark} {\rm Under \eqref{u_left_tail}, the function $V$ takes finite values and $V(y)>0$ for $y$ large enough, hence
 \eqref{v_moderate} makes sense.
The condition $U(0) = 0$ is used only to simplify calculations. Condition \eqref{u_left_tail} is mild
and so is \eqref{v_moderate}: as shown in Corollary 4.2(i) of \cite{s01}, for every utility function $U$ with reasonable asymptotic elasticity, its conjugate $V$ satisfies
\eqref{v_moderate}. The studies \cite{cs}, \cite{lin2017utility} assumed a smooth $U$ which is strictly
concave on its entire domain, we do not need either smoothness or strict concavity of $U$.}
\end{remark}

As discussed in \cite{bc}, \cite{s}, the choice of admissible trading strategies is a delicate issue in the context of utility maximization with utility functions that are defined on the real line. A common approach is to consider strategies whose wealth processes are bounded uniformly from below by a
constant. This choice, however, turns out to be restrictive and fails to contain optimizers. In frictionless markets, \cite{s} proved that for a utility function having reasonable asymptotic elasticity, the optimal investment process is a supermartingale under each martingale measure $Q$ such that $EV(dQ/dP)$ is finite. We will thus 
use the supermartingale property to define admissibility, just like in \cite{oz, campi}. 

To begin with, we define 
$$\mathcal{M}^{\theta}_V = \{ Q^{\theta}: (\tilde{S}^{\theta},Q^{\theta}) \text{ is a $\lambda$-consistent price system, } EV(dQ^{\theta}/dP) < \infty\},$$
the set of local martingale measures in consistent price systems for the $\theta$-model with finite generalized relative entropy.
\begin{definition}
We define 
\begin{eqnarray*}
	\mathcal{A}^{\theta}(x) &= & \left\lbrace  H \in \mathbf{V}:\ \phi_T=0,\ V^x(\theta,H)  \mbox{ is a $Q^{\theta}$-supermartingale } \right. \\
	&&\left. \mbox{ for each $\lambda$-consistent price system }  (\tilde{S}^{\theta}, Q^{\theta}) \mbox{ such that } Q^{\theta} \in \mathcal{M}^{\theta}_V \right\rbrace, 
\end{eqnarray*}
and set $\mathcal{A}(x) := \bigcap_{\theta \in \Theta} \mathcal{A}^{\theta}(x).$	
\end{definition}
 The optimization problem becomes 
\begin{equation}\label{prob_R}
u(x) = \sup_{H \in \mathcal{A}(x)} \inf_{\theta \in \Theta} EU(W^{x,liq}_T(\theta,H)).
\end{equation}

\begin{assumption}\label{weak}
	For each $\theta \in \Theta$, the price process $S^{\theta}$ admits a $\lambda$-SCPS $(Q^{\theta},\tilde{S}^{\theta})$ such that $Q^{\theta}\in
\mathcal{M}^{\theta}_V$.
\end{assumption}

\begin{remark}{\rm Unlike in \cite{cs,csy,lin2017utility} and in Section \ref{sec:U_pos}, in the present
section we do not impose the existence of consistent price systems for \emph{every} transaction cost coefficient $0 < \mu < \lambda$, we only
stipulate Assumption \ref{weak}. 
The following example shows that it is quite possible to
have CPSs for relatively large $\lambda$, without having them for
arbitrarily small $\mu$. In this example, there is an \emph{obvious arbitrage}, in the language of \cite{grw}, which persists (ceases) with sufficiently small (large) transaction costs.}
\end{remark}

\begin{example}\label{example_cost}
	{\rm Let us consider 
	$$S_t = 1 + t + \frac{1}{2\pi} \arctan(W_t), \qquad t \in [0,1].$$
	If $\lambda < 3/7$ then $(1-\lambda)S_1 > 1$ a.s, therefore, there is no consistent price system. If $\lambda\geq 2/3$, 
	then $$
	S_t(1-\lambda)\leq 3/4\leq S_t,\ t \in [0, T].
	$$
	In other words, $(\tilde{S} \equiv 3/4,P)$ is a consistent price system.}
\end{example}

\begin{theorem}\label{thm9}
Under Assumptions \ref{weak} and \ref{U_R}, there exists a strategy $H^* \in \mathcal{A}(x)$ such that
$$ u(x) = \inf_{\theta \in \Theta} EU(W^{x,liq}_T(\theta,H^*)).$$ 
\end{theorem}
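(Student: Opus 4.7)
The plan is to adapt the argument of Theorem \ref{thm_1}, with two new features forced by the fact that $U$ is defined on $\mathbb{R}$: an Orlicz-type $Q^\theta$-uniform integrability estimate to control the negative part of terminal wealth, and a separate verification that the candidate optimizer satisfies the supermartingale-based admissibility, since this property is not automatically inherited by the convex combinations produced in the first Koml\'os step.

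Starting from a maximizing sequence $H^n \in \mathcal{A}(x)$: since $U \leq U(\infty) < \infty$ and the sequence is maximizing, $\sup_n EU^-(W^{x,liq}_T(\theta,H^n)) < \infty$ for every $\theta$. The Fenchel inequality $U(w) \leq V(y)+wy$ rearranges to $w^- y \leq U^-(w)+V(y)$; taking $y = c\,dQ^\theta/dP$ for arbitrary $c \geq 1$ and using the $\Delta_2$-condition \eqref{v_moderate} (which ensures $EV(c\,dQ^\theta/dP)<\infty$ for every $c>0$ and every $Q^\theta \in \mathcal{M}^\theta_V$) yields uniform $Q^\theta$-integrability of $\{W^{x,liq,-}_T(\theta,H^n)\}_n$ under every such $Q^\theta$. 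Fixing $\theta$ and a $\lambda$-SCPS $(\tilde S^\theta, Q^\theta)$ with $Q^\theta \in \mathcal{M}^\theta_V$ as furnished by Assumption \ref{weak}, the supermartingale property of $V^x(\theta,H^n)$ built into admissibility, the It\^o expansion used in the proof of Theorem \ref{thm_1}, and the just-obtained $L^1(Q^\theta)$-bound on $W^{x,liq,-}_T(\theta,H^n)$ together give $\sup_n |H_0^n|<\infty$ and $\sup_n E^{Q^\theta}\int_0^T \delta(\theta)(dH^{n,\uparrow}_u + dH^{n,\downarrow}_u) < \infty$. Lemma \ref{gom} then produces convex combinations $\tilde H^n = \sum_{j=n}^{M(n)} \beta_j^n H^j$ converging in $\mathbf{V}$ to some $H^* \in \mathbf{V}$; by concavity of $U$ these remain maximizing, and by Remark \ref{as}, $W^{x,liq}_T(\theta,\tilde H^n) \to W^{x,liq}_T(\theta,H^*)$ a.s.\ for every $\theta$.

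The main obstacle is verifying $H^* \in \mathcal{A}(x)$: for every $(\tilde S^\theta, Q^\theta)$ with $Q^\theta \in \mathcal{M}^\theta_V$, the process $V^x(\theta,H^*)$ must be a $Q^\theta$-supermartingale. To establish this I follow the strategy of \cite{m} and introduce the non-negative c\`adl\`ag $Q^\theta$-supermartingales
$$
Y^n_t := \sum_{j=n}^{M(n)} \beta_j^n V^x_t(\theta,H^j) + E^{Q^\theta}\Bigl[\sum_{j=n}^{M(n)} \beta_j^n W^{x,liq,-}_T(\theta,H^j)\,\Big|\,\mathcal{F}_t\Bigr],
$$
whose non-negativity follows from $V^x_t(\theta,H^j) \geq -E^{Q^\theta}[W^{x,liq,-}_T(\theta,H^j) \mid \mathcal{F}_t]$ and whose initial values satisfy $\sup_n E^{Q^\theta}[Y^n_0] < \infty$. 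The supermartingale Koml\'os lemma of \cite{do} then yields further convex combinations along which $Y^n$ converges a.s.\ on a countable dense subset of $[0,T]$ to a c\`adl\`ag $Q^\theta$-supermartingale $Y^*$. Passing to a subsequence along which $H_0^n$ converges (possible since $\sup_n|H_0^n|<\infty$), one exploits the concavity of $H \mapsto V^x(\theta,H)$ together with the uniform $Q^\theta$-integrability of $W^{x,liq,-}_T$ to identify $Y^* - E^{Q^\theta}[W^{x,liq,-}_T(\theta,H^*) \mid \mathcal{F}_\cdot]$ with $V^x(\theta,H^*)$, which is therefore a $Q^\theta$-supermartingale. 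Repeating this for every admissible CPS and every $\theta \in \Theta$ yields $H^* \in \mathcal{A}(x)$.

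Optimality then follows easily: since $U \leq U(\infty)$ is an integrable constant majorant, reverse Fatou applied to the a.s.\ convergence $W^{x,liq}_T(\theta,\tilde H^n) \to W^{x,liq}_T(\theta,H^*)$ yields
$$
\limsup_{n\to\infty} EU(W^{x,liq}_T(\theta,\tilde H^n)) \leq EU(W^{x,liq}_T(\theta,H^*))
$$
for each $\theta \in \Theta$. Taking $\inf_\theta$ on the left and using that $\tilde H^n$ is maximizing gives $u(x) \leq \inf_\theta EU(W^{x,liq}_T(\theta,H^*))$, which combined with $H^* \in \mathcal{A}(x)$ shows that $H^*$ is the sought optimizer.
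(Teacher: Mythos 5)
Your overall skeleton matches the paper's: Fenchel bound on the negative part of terminal wealth, boundedness of the total variations, Koml\'os in $\mathbf{V}$, verification that the limit is admissible, and reverse Fatou for optimality. You diverge at the admissibility step, and that is where the gap lies. The paper's device is a \emph{second} Koml\'os step in the Orlicz space $L^{\Phi^*}$ with $\Phi^*(x)=-U(-x)$ (Lemma \ref{compact_orlics}), which produces a single random variable $\sup_n Z^n$ of finite $\Phi^*$-norm; one Fenchel inequality then makes this dominating variable integrable under \emph{every} $Q^\theta\in\mathcal{M}^\theta_V$ at once, and the submartingale property of $(V^x)^-$ converts it into a uniform bound at all times $t$, which drives both the conditional Fatou argument and the extension (via \eqref{vlacsi} and backward martingale convergence) from the dense set of convergence times to all of $[0,T]$. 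Your substitute --- uniform $Q^\theta$-integrability of the terminal negative parts obtained from $EV(c\,dQ^\theta/dP)<\infty$ for all $c>0$, fed into a F\"ollmer--Kramkov-type lemma on convex combinations of nonnegative supermartingales --- is a viable alternative in outline, but note that the ``supermartingale Koml\'os lemma'' you attribute to \cite{do} is not what the paper takes from that reference (Corollary 3.10 there is the Orlicz compactness result), so you are importing an external tool rather than using the one provided.

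The concrete error is in your identification of the limit. Writing $Y^n_t=V^x_t(\theta,\tilde H^n)+E^{Q^\theta}[g^n\mid\mathcal{F}_t]$ with $g^n:=\sum_j\beta^n_j W^{x,liq,-}_T(\theta,H^j)$, the first summand does converge to $V^x_t(\theta,H^*)$ for a.e.\ $t$ and $t=T$ by linearity of $H\mapsto V^x_t(\theta,H)$, but the compensators $g^n$ converge (only after a further Koml\'os extraction, which you do not perform) to some $g$ satisfying merely $g\ge W^{x,liq,-}_T(\theta,H^*)$, by convexity of $w\mapsto w^-$ and Fatou; equality can fail. Hence $Y^*-E^{Q^\theta}[W^{x,liq,-}_T(\theta,H^*)\mid\mathcal{F}_\cdot]$ equals $V^x(\theta,H^*)$ \emph{plus} the nonnegative martingale $E^{Q^\theta}[g-W^{x,liq,-}_T(\theta,H^*)\mid\mathcal{F}_\cdot]$, and a process dominated by a supermartingale need not be one, so your claimed identification does not deliver the supermartingale property. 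The conclusion is salvageable: $V^x(\theta,H^*)=Y^*-E^{Q^\theta}[g\mid\mathcal{F}_\cdot]$ is a supermartingale as the difference of a supermartingale and a uniformly integrable martingale, provided you show $g\in L^1(Q^\theta)$ (this does follow from your uniform integrability claim) and then carry the identity from a.e.\ $t$ to all $t$ and all $s$ by right-continuity and backward martingale convergence, steps your sketch omits but the paper spells out. Your final optimality step, reverse Fatou with the constant majorant $U(\infty)$, is correct and identical to the paper's.
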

\begin{proof}
We adapt certain techniques of \cite{m}. Our arguments bring novelties even in the case where $\Theta$ is a singleton (i.e. without model uncertainty). 
Define $\Phi^*(x) = -U(-x), x\ge 0.$ Its conjugate (in the sense of Subsection
\ref{madras} below) is
\begin{equation}
\Phi(y):= 
\begin{cases}
0, \text{ if } 0\le y \le \beta,\\
V(y) - V(\beta), \text{ if } y > \beta,
\end{cases} 
\end{equation}
where $\beta$ is the left derivative of $U$ at 0, see \cite{bf}. Note that 
$\Phi$, $\Phi^*$ are Young functions and $\Phi$ is of class $\Delta_2$, by (\ref{v_moderate}).

Let $H^n \in \mathcal{A}(x)$, $n\in\mathbb{N}$ be a maximizing sequence, i.e.
\begin{equation}\label{max_seq}
\inf_{\theta \in \Theta}EU(W^{x,liq}_T(\theta,H^n)) \uparrow u(x)\geq U(x).
\end{equation}
First, for all $\theta \in \Theta$, it holds that
\begin{equation}\label{U_ui} 
\sup_n EU(W^{x,liq}_T(\theta, H^n))^- < \infty.
\end{equation}
Indeed, let us assume that there exists $\theta \in \Theta$ such that (\ref{U_ui}) does not hold, or equivalently, there exists a subsequence $n_k=n^{\theta}_k$, $k\in\mathbb{N}$ such that
$EU(W^{x,liq}_T(\theta, H^{n_k}))^- > k.$ Let us denote by $C$ an upper bound of $U$, then
\begin{align*}
EU(W^{x,liq}_T(\theta, H^{n_k})) \le C - EU(W^{x,liq}_T(\theta, H^{n_k}))^- \to -\infty,
\end{align*}
$k\to\infty$, which contradicts (\ref{max_seq}). Hence (\ref{U_ui}) indeed holds. 

Consider a $\lambda$-strictly consistent price system $(\tilde{S}^{\theta},Q^{\theta})$. Fenchel's inequality gives
$$U(V^x_T(\theta,H^n)) - V((dQ^{\theta}/dP)) \le (dQ^{\theta}/dP) V^x_T(\theta,H^n)$$
and therefore $(dQ^{\theta}/dP)(V^x_T(\theta,H^n))^- \le \left( U(V^x_T(\theta,H^n)) - V((dQ^{\theta}/dP)) \right)^-.$
From (\ref{U_ui}), we deduce that 
\begin{equation}\label{bound_V_minus}
\sup_n E^{Q^{\theta}}(V^x_T(\theta,H^n))^- < \infty.
\end{equation}
It\^o's formula gives
\begin{eqnarray*}
dV^x_t(\theta,H^n) &=& - S^{\theta}_tdH^{n,\uparrow}_t + (1-\lambda)S^{\theta}_tdH^{n,\downarrow}_t + \tilde{S}^{\theta}_td\phi^n_t + \phi^n_td\tilde{S}^{\theta}_t\\
&=& (\tilde{S}^{\theta}_t - S^{\theta}_t)dH^{n,\uparrow}_t + [(1-\lambda)S^{\theta}_t - \tilde{S}^{\theta}_t]dH^{n,\downarrow}_t + \phi^n_td\tilde{S}^{\theta}_t.
\end{eqnarray*}
This implies that 
\begin{eqnarray*}
H^{n,+}_0(S^{\theta}_0-\tilde{S}_0^{\theta}) + 
\int_0^t{ (S^{\theta}_u - \tilde{S}^{\theta}_u)dH^{n,\uparrow}_u + \int_0^t 
[\tilde{S}^{\theta}_u - (1-\lambda)S^{\theta}_u ]dH^{n,\downarrow}_u}  + \left( \int_{0}^t{\phi^n_u d\tilde{S}^{\theta}_u} \right)^-   \\
\le x+ H^{n,-}_0 (S^{\theta}_0 (1-\lambda)-\tilde{S}_0^{\theta})+  (V^x_t(\theta,H^n))^- + \left( \int_0^t{\phi^n_ud\tilde{S}^{\theta}_u} \right)^+ .
\end{eqnarray*}
In particular, 
\begin{equation}\label{in_phi}
\left( \int_0^t{\phi^n_u d\tilde{S}^{\theta}_u} \right)^- \le x+ H^{n,-}_0 S^{\theta}_0(1-\lambda)  +  (V^x_t(\theta,H^n))^-.
\end{equation}
For each $n$, the process $V^x(\theta,H^n)$ is a $Q^{\theta}$-supermartingale, so there exists a 
$Q^{\theta}$-martingale which dominates the RHS of (\ref{in_phi}) and also the LHS of the same expression. Corollaire 3.5 of \cite{as}
implies that, $\int_0^t{\phi^n_u d\tilde{S}^{\theta}_u}$, $t\in [0,T]$ is a $Q^{\theta}$-supermartingale.
We get $\sup_n H^{n,-}_0<\infty$ in the same way as in the proof of Theorem \ref{thm_1}.
Consequently, (\ref{bound_V_minus}), (\ref{in_phi}) and the boundedness of $H^{n,-}_0, n \in \mathbb{N}$ give $$\sup_n E^{Q^{\theta}}\left( \int_0^T{\phi^n_t d\tilde{S}^{\theta}_t} \right)^+ < \infty.$$
Noting that $(\tilde{S}^{\theta}, Q^{\theta})$ is a $\lambda$-strictly consistent price system, we obtain from the above arguments that
\begin{eqnarray*}
\sup_n E^{Q^{\theta}}\left( H^{n,+}_0 S^{\theta}_0 +  \delta (\theta) \int_0^T[dH^{n,\uparrow}_t + dH^{n,\downarrow}_t]  \right ) & \le &  \\
\sup_n E^{Q^{\theta}}\left( H^{n,+}_0 S^{\theta}_0 +   \int_0^T{ (S^{\theta}_t - \tilde{S}^{\theta}_t)dH^{n,\uparrow}_t + [\tilde{S}^{\theta}_t - (1-\lambda)S^{\theta}_t ]dH^{n,\downarrow}  } \right) &<& \infty.
\end{eqnarray*}
Thus Lemma \ref{gom} implies the existence of convex weights $\alpha^n_j \ge 0$, $j=n,\ldots,M(n)$, 
$\sum_{j=n}^{M(n)} \alpha_j^n = 1$ such that $\tilde{H}^n:= \sum_{j=n}^{M(n)} \alpha_j^n H^n \to H^*$ in $\mathbf{V}$. Since convex combinations improve performance of concave utility functions, $\tilde{H}^n, n \in \mathbb{N}$ is also a maximizing sequence. 

We will prove that $H^* \in \mathcal{A}(x)$, in other words, the process $V^x(\theta,H^*)$ is a $Q^{\theta}$ supermartingale, for each $Q^{\theta} \in \mathcal{M}^{\theta}_V$ and for each $\theta \in \Theta$. To do so, it suffices to control the negative part of $V^x(\theta,H^*)$. It should be emphasized that (\ref{bound_V_minus}) is not enough for our purposes and a stronger statement using Orlicz space theory is needed (see Subsection \ref{madras}). 
Using concavity of $U$ and linearity of $V^x(\theta, \cdot)$, we get from (\ref{U_ui}) that
\begin{equation}\label{bound_U_tilde}
\sup_n E\left( U(V^x_T(\theta, \tilde{H}^n))\right)^- < \infty.
\end{equation}

Applying Lemma \ref{compact_orlics} to the sequence of random variables in (\ref{bound_U_tilde}), we obtain convex weights $\alpha'^n_j \ge 0$, $n\le j\le M(n)$, $\sum_{j=n}^{M(n)} \alpha'^n_j = 1$ such that  
$$Z^n:= \sum_{j=n}^{M(n)} \alpha'^n_j\left( V^x_T(\theta,\tilde{H}^n) \right)^-  $$
satisfy
\begin{equation}\label{z_ui}
L:=||\sup_nZ^n||_{\Phi^*} < \infty,
\end{equation}
By the Fenchel inequality and (\ref{z_ui}), 
\begin{equation}\label{z_Q}
E^{Q^{\theta}}\sup_n Z^n = L E^{Q^{\theta}}\left(\frac{\sup_n Z^n}{L} \right) \le L E\Phi \left( \frac{dQ^{\theta}}{dP}\right) + LE \Phi^*\left( \frac{\sup_nZ^n}{L} \right)  < \infty,
\end{equation}
for each $Q^{\theta} \in \mathcal{M}^{\theta}_V.$ Inequality (\ref{z_Q}) is trivial when $L=0$. 
Now, we define 
\begin{equation}
\overline{H}^n := \sum_{j=n}^{M(n)} \alpha'^n_j \tilde{H}^n,
\end{equation}
which is also a maximizing sequence. Using the fact that the negative part of a supermartingale is a submartingale, we get $V^x_t(\theta,\overline{H}^n)^- \le  E^{Q^{\theta}}[V^x_T(\theta,\overline{H}^n)^- |\mathcal{F}_t] $
and thus
$$\sup_n V^x_t(\theta,\overline{H}^n)^- \le \sup_nE^{Q^{\theta}}[V^x_T(\theta,\overline{H}^n)^- |\mathcal{F}_t].$$
Taking expectation both sides of the above inequality, we obtain
\begin{eqnarray*}
	E^{Q^{\theta}} \sup_n V^x_t(\theta,\overline{H}^n)^- &\le&  E^{Q^{\theta}} \left[ \sup_nE^{Q^{\theta}}[V^x_T(\theta,\overline{H}^n)^- |\mathcal{F}_t] \right] \\
	&\le & E^{Q^{\theta}} E^{Q^{\theta}}[ \sup_n V^x_T(\theta,\overline{H}^n)^- |\mathcal{F}_t]\\
	&\le& E^{Q^{\theta}} \sup_n V^x_T(\theta,\overline{H}^n)^- \\
	&\le & E^{Q^{\theta}} \sup_n Z^n   < \infty,
\end{eqnarray*}
using convexity of the mapping $x \mapsto x^-$ and (\ref{z_Q}). Since the random variable 
\begin{equation*}
\sup_{n}(V^x_t(\theta,\overline{H}^n))^-
\end{equation*}
is an upper bound of the sequence $V^x_t(\theta, \overline{H}^n)^-, n \in \mathbb{N}$,  this proves uniform integrability of that sequence under $Q^{\theta}$ at any time $t \in [0,T]$.
Also, 
\begin{equation}\label{vlacsi}
(V^x_t(\theta,{H}^*))^-\leq E^{Q^{\theta}}[ \sup_n V^x_T(\theta,\overline{H}^n)^- |\mathcal{F}_t],\ t\in [0,T],
\end{equation}
the latter process is a martingale and hence it is uniformly integrable.  

Clearly, $\overline{H}^n \to H^* \in \mathbf{V}$ and therefore 
$V^x_t(\theta,\overline{H}^n) \to V^x_t(\theta,H^*)$ a.s.  for every $t \in D\subset [0,T]$, where $[0,T]\setminus D$ has Lebesgue
measure $0$, see Remark \ref{as}. 
Let $0\le s \le t < T$ be both in $D$. Noting the supermartingale property, Fatou's lemma yields
\begin{eqnarray*}
E^{Q^{\theta}}\left[ V^x_t(\theta,H^*)|\mathcal{F}_s\right]  &=& E^{Q^{\theta}}\left[ \liminf_{n \to \infty}V^x_t(\theta,\overline{H}^n)|\mathcal{F}_s\right] \le \liminf_{n\to \infty} E^{Q^{\theta}}\left[ V^x_t(\theta,\overline{H}^n) |\mathcal{F}_s\right]\\
&\leq& \liminf_{n\to\infty} V^x_s(\theta,\overline{H}^n) = V^x_s(\theta,H^*).
\end{eqnarray*} 
The same argument works for $t=T$, too.
Now it extends to arbitrary $t\in [0,T]$ using Fatou's lemma and \eqref{vlacsi}. Finally, it extends to arbitrary
$s\in [0,T]$ by the backward martingale convergence theorem and by
right-continuity of $t\to V^x_t(\theta,H^*)$. This means that $V^x(\theta,H^*)$ is a $Q^{\theta}$-supermatingale and therefore $H^* \in \mathcal{A}(x)$.

Since $U$ is bounded from above, by Fatou's lemma 
\begin{align*}
\limsup_{n\to\infty}\left(\inf_{\theta \in \Theta} EU(W^{x,liq}_T(\theta, \overline{H}^n)) \right) &\le \inf_{\theta \in \Theta} \limsup_{n\to\infty} EU(W^{x,liq}_T(\theta, \overline{H}^n))\\
&\le  \inf_{\theta \in \Theta} E U(W^{x,liq}_T(\theta, H^*)),  
\end{align*} 
which proves the optimality of $H^*$. 
\end{proof}

\section{Conclusions}\label{qclu}

It is possible to extend our results in Section \ref{sec:U_R}:
one could treat the multi-asset ``conic'' framework of \cite{ks}; unbounded utilities could also be incorporated along the lines of Theorem 3.12 in
\cite{m}; random endowments (or random utilities) can be added at little
cost since we do not consider the dual problem at all. These extensions, however, require no essential new ideas
while they would considerably complicate the presentation. 
Our emphasis here is on introducing a new approach, and not on striving
for the utmost generality. 

Admitting jumps in the price process leads to a 
more involved class of strategies. The treatment of that setting is a direction of research worth
pursuing in the future.
  
\section{Appendix}\label{appendix}

\subsection{Finite variation processes}\label{section:v_space}

Let $\mathcal{V}$ denote the family of non-decreasing, right-continuous functions on $[0,T]$ which are $0$ at $0$.
Let $r_k$, $k\in\mathbb{N}$ be an enumeration of $D:=\left(\mathbb{Q}\cap [0,T]\right)\cup \{T\}$ with $r_0=T$.
For $f,g\in\mathcal{V}$, define
\[
\rho(f,g):=\sum_{k=0}^{\infty} 2^{-k}|f(r_k)-g(r_k)|.
\]
The series converges since $|f(r_k)-g(r_k)|\leq f(T)+g(T)$, and it defines a metric. The corresponding Borel-field
is denoted by $\mathcal{G}$.

Let $\mathbf{V}$ denote the set of triplets 
$H=(H^{\uparrow},H^{\downarrow},H_0)$ where
$H^{\uparrow}_t,H^{\downarrow}_t$, $t\in [0,T]$ are optional processes 
such that $H^{\uparrow}(\omega),H^{\downarrow}(\omega)\in\mathcal{V}$ for each 
$\omega\in\Omega$ and $H_0\in\mathbb{R}$ (deterministic).
Considered as mappings $H^{\uparrow},H^{\downarrow}:(\Omega, \mathcal{F}) \to (\mathcal{V}, \mathcal{G})$, they are measurable, 
by the definition of the metric $\rho$.
We identify elements of $\mathbf{V}$ when they coincide outside a $P$-null set. 
We say that a sequence $H^n\in\mathbf{V}$ is convergent to some $H\in\mathbf{V}$ if $H^{n,\uparrow}\to H^{\uparrow}$ and
$H^{n,\downarrow}\to H^{\downarrow}$ a.s. in $\mathcal{V}$, $n\to\infty$ and also $H^n_0\to H_0$ (in the topology of $\mathbb{R}$).


Convex compactness-type results for finite variation processes have been introduced in various forms in the
literature. The next result is very similar to Lemma 3.5 in \cite{ka}.

\begin{lemma}\label{gom} Let 
$H^n\in\mathbf{V}$, $n\in\mathbb{N}$ 
be such that $$
\sup_{n\in\mathbb{N}}\left(E^Q[H^{n,\uparrow}_T+H^{n,\downarrow}_T] + |H^n_0|\right)<\infty
$$
for some $Q\sim P$.
	Then there is $H\in\mathbf{V}$ and there
	are convex weights $\alpha^n_j\geq 0$, $j=n,\ldots,M(n)$, $\sum_{j=n}^{M(n)}\alpha^n_j=1$, $n\in\mathbb{N}$
	such that 
	\[
	\tilde{H}^n:=\sum_{j=n}^{M(n)}\alpha_j^n H^j\to H\mbox{ in }\mathbf{V}.
	\]
It follows that, for $P$-almost every $\omega\in\Omega$,
$$
\tilde{H}^{n,\uparrow}_t(\omega) \to H^{\uparrow}_t(\omega)\mbox{ and }
\tilde{H}^{n,\downarrow}_t(\omega)\to H^{\downarrow}_t(\omega)
$$
at $t=T$ and each $t$ which is a continuity point of both $H^{\uparrow}(\omega)$ and $H^{\downarrow}(\omega)$. 
\end{lemma}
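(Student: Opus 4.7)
The plan is to combine a Koml\'os-type compactness argument, applied coordinate-by-coordinate in $L^0$, with a diagonal procedure over the countable dense set $D=\{r_k\}_{k\in\mathbb{N}}\subset[0,T]$, and then to recover a limit in $\mathbf{V}$ by right-continuous extension. First I would pass to a subsequence so that the bounded real sequence $H^n_0$ converges to some $H_0\in\mathbb{R}$.

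For each $k$, both $H^{n,\uparrow}_{r_k}$ and $H^{n,\downarrow}_{r_k}$ are dominated pathwise by $H^{n,\uparrow}_T$ and $H^{n,\downarrow}_T$, so by hypothesis they are bounded in $L^1(Q)$, hence in $L^0$ under $P$ (since $Q\sim P$). Applying the $L^0_+$-sharpening of Koml\'os' theorem (e.g.\ Lemma A1.1 of Delbaen--Schachermayer) to each coordinate, I extract convex combinations whose values at $r_k$ converge a.s.\ to some non-negative random variables, which are a.s.\ finite by Fatou applied to the $L^1(Q)$-bound. A diagonal procedure over $D$ and over the two components $\{\uparrow,\downarrow\}$ then yields a single set of convex weights $\alpha^n_j$ such that, with $\tilde H^n:=\sum_{j=n}^{M(n)}\alpha^n_jH^j$, one has $\tilde H^{n,\uparrow}_{r_k}\to G^{\uparrow}(r_k)$ and $\tilde H^{n,\downarrow}_{r_k}\to G^{\downarrow}(r_k)$ a.s.\ for every $k$. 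Monotonicity in $k$ and the fact that $\tilde H^{n,\uparrow}_0=\tilde H^{n,\downarrow}_0=0$ ensure that $G^{\uparrow},G^{\downarrow}$ are non-decreasing functions on $D$ vanishing at $0$. I would then define the right-continuous modifications $H^{\uparrow}_t:=\inf_{s\in D,\,s>t}G^{\uparrow}(s)$ for $t<T$ and $H^{\uparrow}_T:=G^{\uparrow}(T)$, and analogously $H^{\downarrow}$. Both are adapted, right-continuous, non-decreasing, and belong pathwise to $\mathcal{V}$, so $H=(H^{\uparrow},H^{\downarrow},H_0)\in\mathbf{V}$.

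Convergence in $\mathbf{V}$ reduces to $\rho(\tilde H^{n,\uparrow},H^{\uparrow})\to 0$ and $\rho(\tilde H^{n,\downarrow},H^{\downarrow})\to 0$ a.s., which follows by dominated convergence applied to the defining series: the general term $2^{-k}|\tilde H^{n,\uparrow}_{r_k}-H^{\uparrow}_{r_k}|$ tends to $0$ a.s.\ and is dominated, uniformly in $n$, by $2^{-k}(\sup_n \tilde H^{n,\uparrow}_T+H^{\uparrow}_T)$, which is a.s.\ summable in $k$ thanks to the a.s.\ finiteness of the dominating factor. For the concluding statement I would sandwich $\tilde H^{n,\uparrow}_t$ between $\tilde H^{n,\uparrow}_{r_{k_1}}$ and $\tilde H^{n,\uparrow}_{r_{k_2}}$ for rationals $r_{k_1}<t<r_{k_2}$ chosen so that $H^{\uparrow}_{r_{k_2}}-H^{\uparrow}_{r_{k_1}}$ is arbitrarily small (possible at a continuity point of $H^{\uparrow}$ by right-continuity), and similarly for $H^{\downarrow}$; passing to the limit yields the pointwise convergence at every such $t$, and the case $t=T$ is already covered since $T\in D$. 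The main obstacle is organizing the diagonal procedure so that it is genuinely joint across $k$ and across $\{\uparrow,\downarrow\}$: successive Koml\'os extractions must be built from convex combinations of the previously chosen convex combinations, which preserves a.s.\ convergence at earlier indices, so that after diagonalisation a single sequence $\tilde H^n$ works simultaneously for all $r_k$ and both monotone components.
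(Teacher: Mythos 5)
Your argument is correct and follows essentially the same route as the paper's proof: a Koml\'os extraction combined with diagonalization over the countable dense set $D$, right-continuous extension of the limit to obtain an element of $\mathcal{V}$, and the sandwich estimate between rationals $q_1<t<q_2$ at continuity points. The only additions are your explicit verification of $\rho$-convergence by dominated convergence and the citation of the $L^0$-version of Koml\'os' theorem, both of which the paper leaves implicit.
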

\begin{proof}
Recall that $D = ([0,T] \cap \mathbb{Q}) \cup \{T\}$. By assumption, the sequence $H^{n,\uparrow}_T, n \in \mathbb{N}$ is bounded in $L^1(Q)$ for some $Q \sim P$ so we use the Koml\'os theorem together with a diagonalization procedure to obtain sequences of convex weights $\alpha^n_j$ such that
\begin{equation}\label{eq_conv_rat}
\tilde{H}^{n,\uparrow}_t \to H^{\uparrow}_t, \qquad t \in D
\end{equation}
for some $\mathcal{F}_t$-measurable random variables $H^{\uparrow}_t$, on an event $\tilde{\Omega}$ with $P[\tilde{\Omega}] =  1$. Since the limiting process, if exists, is non-decreasing and right-continuous, we set
$$H^{\uparrow}_t = \lim_{q \downarrow t, q \in \mathbb{Q}} H^{\uparrow}_q, \qquad t \in [0,T).$$
We prove that, for $\omega\in\tilde{\Omega}$,
\begin{equation}\label{eq_conv_all_t}
\tilde{H}^{n,\uparrow}_t(\omega) \to H^{\uparrow}_t(\omega), 
\end{equation}
for each $t \in [0,T)$ that is a continuity point of the function $s \to H^{\uparrow}_s(\omega)$.
Fix $\varepsilon > 0$ arbitrarily. Using continuity at $t$ of $H^{\uparrow}$, we find two rational numbers $q_1, q_2$ such that $q_1 < t < q_2$ and that $H^{\uparrow}_{q_2}(\omega) - H^{\uparrow}_{q_1}(\omega) < \varepsilon.$ From (\ref{eq_conv_rat}), there exists $N = N(\omega)$ such that 
$$|\tilde{H}^{n,\uparrow}_{q_2}(\omega) - H^{\uparrow}_{q_2}(\omega)| < \varepsilon, \qquad |\tilde{H}^{n,\uparrow}_{q_1}(\omega) - H^{\uparrow}_{q_1}(\omega)| < \varepsilon, \qquad \forall n \ge N.$$ 
We estimate for all $n \ge N$
\begin{eqnarray*}
|\tilde{H}^{n,\uparrow}_{q_2}(\omega) - \tilde{H}^{n,\uparrow}_{q_1}(\omega)| &\le& |\tilde{H}^{n,\uparrow}_{q_2}(\omega) - H^{\uparrow}_{q_2}(\omega)| + |H^{\uparrow}_{q_2}(\omega) - H^{\uparrow}_{q_1}(\omega)| \\
&+&  |H^{\uparrow}_{q_1}(\omega) - \tilde{H}^{n,\uparrow}_{q_1}(\omega)|\le 3\varepsilon.
\end{eqnarray*}
Therefore, using monotonicity of $\tilde{H}^{n,\uparrow}$, we obtain for all $n \ge N(\omega)$
\begin{eqnarray*}
|\tilde{H}^{n,\uparrow}_t(\omega) - H^{\uparrow}_t(\omega)| &\le& |\tilde{H}^{n,\uparrow}_t(\omega) - \tilde{H}^{n,\uparrow}_{q_2}(\omega)| + |\tilde{H}^{n,\uparrow}_{q_2}(\omega) - H^{\uparrow}_{q_2}(\omega)| + |H^{\uparrow}_{q_2}(\omega) - H^{\uparrow}_t(\omega)|\\
&\le& |\tilde{H}^{n,\uparrow}_{q_1}(\omega) - \tilde{H}^{n,\uparrow}_{q_2}(\omega)| + |\tilde{H}^{n,\uparrow}_{q_2}(\omega) - H^{\uparrow}_{q_2}(\omega)| + |H^{\uparrow}_{q_2}(\omega) - H^{\uparrow}_{q_1}(\omega)|\\
&\le&  5 \varepsilon.
\end{eqnarray*}
Notice that \eqref{eq_conv_all_t} also holds for $t=T$.
The same argument can be repeated for the sequence $\tilde{H}^{n,\downarrow}, n \in \mathbb{N}$ and also $H_0^n\to H_0$ can
be guaranteed with some $H_0\in\mathbb{R}$ by extracting a further subsequence.

\end{proof}

\begin{remark}\label{as}
	{\rm The above proof shows that if $f_n\to f$, $n\to\infty$ in $\mathcal{V}$ then $f_n(x)$ tends to $f(x)$ in every continuity point $x$ of $f$. Consequently, for any continuous  $g:[0,T]\to\mathbb{R}$, $\int_0^T g(t)\,df_n(t)\to \int_0^T g(t)\,df(t)$, $n\to\infty$ where integration is meant with respect to the measures induced by $f_n$, $f$.

Hence for the sequence $\tilde{H}^n$ constructed in Lemma
\ref{gom} above, $W^{x,liq}_t(\theta,\tilde{H}^n)(\omega)\to
W^{x,liq}_t(\theta,H)(\omega)$ and $V^{x}_t(\theta,\tilde{H}^n)(\omega)\to
V^{x}_t(\theta,H)(\omega)$, $n\to\infty$ almost surely in $t=T$ and in every $t$ which is a continuity point of both $H^{\uparrow}(\omega), H^{\downarrow}(\omega)$, in particular, for Lebesgue-a.e.\ $t$.
Fubini's theorem
thus implies that there is a set $Z$ of zero Lebesgue-measure (excluding $T$) such that for $t\in [0,T]\setminus Z$,
$W^{x,liq}_t(\theta,\tilde{H}^n)\to
W^{x,liq}_t(\theta,H)$ and $V^{x}_t(\theta,\tilde{H}^n)\to
V^{x}_t(\theta,H)$ hold $P$-almost surely.}
\end{remark}

\subsection{Orlicz spaces}\label{madras}

We call $\Phi:\mathbb{R}_+ \to \mathbb{R}_+$ a \emph{Young function} if it is convex with
$\Phi(0) = 0$ and $\lim_{x\to \infty}\Phi(x)/x = \infty$. The set
$$L^{\Phi}:= \{ X \in L^0: E\Phi(\gamma |X|) < \infty \text{ for some } \gamma > 0 \} $$
is a Banach space with the following norm
$$ ||X||_{\Phi}:= \inf\{ \gamma > 0: X \in \gamma B_{\Phi} \} $$
where $B_{\Phi}:=\{ X \in L^0: E\Phi(|X|) \le 1 \}$, the unit ball of $L^{\Phi}$. Define the conjugate function $\Phi^*(y):=\sup_{x\ge 0}(xy - \Phi(x)), y \in \mathbb{R}_+$. This is also a Young function and $(\Phi^*)^* = \Phi.$
We say that $\Phi$ is of class $\Delta_2$ if 
$$ \limsup_{x\to \infty} \frac{\Phi(2x)}{\Phi(x)} < \infty.$$
We recall Corollary 3.10 of \cite{do}, a compactness result which will be used to handle the losses of trading strategies in this paper.
\begin{lemma}\label{compact_orlics}
Let $\Phi$ be a Young function of class $\Delta_2$ and let $\xi_n, n\ge 1$ be a norm-bounded sequence in $L^{\Phi^*}$. Then there are convex weights $\alpha^n_j \ge 0$, $n \le j \le M(n)$, $ \sum_{j=n}^{M(n)}\alpha^n_j = 1$ such that
$$\xi'_n:= \sum_{j=n}^{M(n)} \alpha^n_j \xi_j $$
converges almost surely to some $\xi \in L^{\Phi^*}$ as $n\to\infty$, and $\sup_{n}|\xi'_n|$ is in $L^{\Phi^*}$.\hfill $\Box$
\end{lemma}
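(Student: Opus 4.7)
My plan has two main stages: first obtain almost-sure convergence of convex combinations via a reduction to $L^0$, then upgrade the construction to produce the dominated envelope in $L^{\Phi^*}$.

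\emph{Stage 1 (almost-sure convergence and identification of the limit).} Let $C:=\sup_n\|\xi_n\|_{\Phi^*}<\infty$, so $E\Phi^*(|\xi_n|/C)\le 1$ for every $n$. Since $\Phi^*(t)/t\to\infty$, Markov's inequality yields $\sup_n P(|\xi_n|>R)\le 1/\Phi^*(R/C)\to 0$ as $R\to\infty$, and hence $(\xi_n)$ is bounded in probability. The $L^0$-Koml\'os lemma of Delbaen and Schachermayer then produces convex combinations $\eta_n=\sum_{j=n}^{N(n)}\alpha_j^n\xi_j$ converging almost surely to some $\eta\in L^0$. Convexity of the Luxemburg norm keeps $\|\eta_n\|_{\Phi^*}\le C$, and Fatou's lemma applied to $\Phi^*(|\eta_n|/C)$ gives $\|\eta\|_{\Phi^*}\le C$, so $\eta\in L^{\Phi^*}$.

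\emph{Stage 2 (the dominated envelope, and the main obstacle).} Enforcing $\sup_n|\xi'_n|\in L^{\Phi^*}$ does \emph{not} follow automatically from Stage 1, since $L^{\Phi^*}$ need not have order-continuous norm: that would require $\Phi^*\in\Delta_2$, which is not assumed. The $\Delta_2$ condition on $\Phi$ enters decisively here: it implies $L^\Phi$ is separable with order-continuous norm, so $(L^\Phi)^*=L^{\Phi^*}$ and $(\eta_n)$ is weak-$*$ relatively sequentially compact by Banach--Alaoglu. Combined with the a.s.\ convergence from Stage 1 and the uniform integrability of $f\eta_n$ for each fixed $f\in L^\Phi$ (a consequence of the order-continuity of $\|\cdot\|_\Phi$ together with the generalized H\"older inequality), the weak-$*$ limit of any subsequence of $(\eta_n)$ must coincide with $\eta$. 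One then performs a diagonal extraction: choose $n_k\ge k$ and convex weights $\beta_j^k\ge 0$ with $\sum_{j\ge k}\beta_j^k=1$, so that $\xi'_k:=\sum_{j\ge k}\beta_j^k\eta_{n_j}$ satisfies $\|\xi'_k-\eta\|_{\Phi^*}\le 2^{-k}$. This Mazur-type step is the chief difficulty: weak-$*$ convergence in the non-reflexive dual space $L^{\Phi^*}$ does not in general produce norm-close convex combinations, and the resolution (the substance of \cite{do}) combines the $\Delta_2$-property of $\Phi$ with the rigidity supplied by the a.s.\ limit.

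\emph{Conclusion.} A composition of convex combinations is again a convex combination of the original $\xi_j$, so relabeling the double-indexed weights yields a family of the required form. Telescoping gives
\begin{equation*}
\sup_k|\xi'_k|\le|\eta|+\sum_k|\xi'_k-\eta|,
\end{equation*}
and since $\sum_k\|\xi'_k-\eta\|_{\Phi^*}\le\sum_k 2^{-k}<\infty$, the right-hand side belongs to $L^{\Phi^*}$, giving the dominated envelope. Almost-sure convergence $\xi'_k\to\eta$ is immediate, via Borel--Cantelli, from the geometric decay of the norms $\|\xi'_k-\eta\|_{\Phi^*}$ (passing to a further subsequence if necessary).
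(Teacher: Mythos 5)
The paper offers no proof of this lemma at all: it is quoted verbatim as Corollary~3.10 of \cite{do} (hence the bare $\Box$). So the comparison is between your argument and the actual content of that reference, and your argument has a genuine gap at its one nontrivial point.

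Your Stage~1 is essentially fine, modulo a citation quibble: the $L^0$-Koml\'os lemma you invoke is stated for \emph{nonnegative} sequences, and boundedness in probability alone does not in general yield a.s.\ convergent convex combinations for signed sequences. The clean route is Young's inequality, $|\xi_n|/C\le \Phi^*(|\xi_n|/C)+\Phi(1)$, which shows the sequence is bounded in $L^1$, so the classical Koml\'os theorem applies; Fatou then places the limit in $L^{\Phi^*}$ exactly as you say. The real problem is Stage~2, and you have in effect flagged it yourself. The entire substance of the lemma beyond Stage~1 is the claim $\sup_n|\xi'_n|\in L^{\Phi^*}$, and your route to it requires convex combinations with $\|\xi'_k-\eta\|_{\Phi^*}\le 2^{-k}$, i.e.\ \emph{norm} convergence. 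Mazur's lemma produces norm-convergent convex combinations only from convergence in the weak topology $\sigma\bigl(L^{\Phi^*},(L^{\Phi^*})^*\bigr)$, whereas Banach--Alaoglu gives you only weak-$*$ compactness relative to the predual $L^{\Phi}$; when $L^{\Phi^*}$ is not reflexive these topologies differ, and there is no general principle converting weak-$*$ convergence into norm-close convex combinations. You acknowledge this and write that the resolution is ``the substance of \cite{do}'' --- but that is precisely the statement under proof, so the argument is circular exactly where it matters. Note also that the lemma does not assert norm convergence of $\xi'_n$, and it is not clear that norm convergence is even achievable in general; the envelope bound $\sup_n|\xi'_n|\in L^{\Phi^*}$ must be extracted by constructing the weights directly (a truncation/exhaustion argument), not as a corollary of a geometric decay of norms. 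Everything downstream of the missing estimate (the telescoping bound, the Fatou property of the Orlicz norm, Borel--Cantelli) is fine conditional on it, but the estimate itself is not established.
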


\bibliography{robust}{}
\bibliographystyle{plain}

\end{document}